\documentclass[runningheads]{llncs}

\usepackage{amsmath,amssymb}
\usepackage{mathtools}
\usepackage{microtype}
\usepackage{algorithm}
\usepackage{algorithmic}
\usepackage{xspace}

\newcommand{\R}{\mathbb{R}}
\newcommand{\B}{B}
\newcommand{\Lam}{\Lambda}
\newcommand{\bars}{\B(M)}

\newcommand{\TopK}{\mathrm{Top}\text{-}K}

\newcommand{\Kc}{K_{\text{all}}}

\newcommand{\ThmA}{Theorem~\ref{thm:anchored-grouping}}
\newcommand{\ThmB}{Theorem~\ref{thm:topk-correctness-complexity}}
\newcommand{\ThmE}{Theorem~\ref{thm:stability-topk}}
\newcommand{\PropF}{Proposition~\ref{prop:lb}}

\begin{document}

\title{Top-K Exterior Power Persistent Homology: Algorithm, Structure, and Stability}
\titlerunning{Top-K Exterior Power Persistent Homology}
\author{Yoshihiro Maruyama}
\authorrunning{Y. Maruyama}
\institute{School of Informatics, Nagoya University, Japan\\
\email{maruyama@i.nagoya-u.ac.jp}
}

\maketitle

\begin{abstract}
Exterior powers play important roles in persistent homology in computational geometry. 
In the present paper we study the problem of extracting the $K$ longest intervals of the exterior-power layers $\Lam^i M$ of a tame persistence module $M$, directly from the barcode $\B(M)$, without enumerating the entire $\B(\Lam^i M)$. 
We prove a structural decomposition theorem that organizes $\B(\Lam^i M)$ into monotone per-anchor streams with explicit multiplicities, enabling a best-first algorithm.
We provide an $O\bigl((M+K)\log M\bigr)$ time algorithm for any fixed $i \ge 2$, obtained via a grouped best-first search. 
We also show that the Top-$K$ length vector is $2$-Lipschitz under bottleneck
perturbations of the input barcode, and prove a comparison-model
lower bound implying the $O(M\log M)$ preprocessing is information-theoretically unavoidable. 
Our experiments confirm the theory, showing speedups over full enumeration in high overlap cases.
By enabling efficient extraction of the most prominent features, our approach makes higher-order persistence feasible for large datasets and thus broadly applicable to machine learning, data science, and scientific computing.

\keywords{Computational geometry \and persistent homology \and stability}
\end{abstract}

\section{Introduction}

Exterior powers $\Lam^i$ in persistent homology capture higher-order interactions among topological features that ordinary persistence cannot record \cite{Ghrist08,EdelsbrunnerHarer10,Oudot15}.\footnote{$\Lam$ denotes exterior product/power. $\Lam$ is defined pointwise for persistence modules (i.e., take exterior product at each element of the domain interval; more detail below).} While standard persistence tracks the lifetimes of individual cycles, exterior powers $\Lam^i$ encode how groups of cycles coexist, providing richer invariants that are stable and computable \cite{CohenSteiner07,ChazalOudot14,ChazalStructure16}. These higher-order signatures are increasingly important in applications ranging from theoretical topology to machine learning, where concise and discriminative summaries are essential.

However, computing the full exterior-power barcode quickly becomes infeasible: even for $\Lam^2$, its size can be quadratic in the input. Straightforward algorithms that enumerate all intervals therefore waste work when only the most significant features are needed. This motivates the \emph{Top-$K$ problem}: extracting just the $K$ longest intervals of $\B(\Lam^i M)$ without enumerating the entire structure. Such a view aligns with practice, where users seek concise visual summaries, robust statistics, or fixed-length features for downstream learning.

This work shows that Top-$K$ for exterior powers admits an efficient, stable solution, bridging classical persistent homology algorithms \cite{ELZ02,ZC05,BauerRipser21} with techniques from selection algorithms \cite{Fagin03,FredericksonJohnson84} and persistent data structures \cite{DriscollEtAl89}. This makes higher-order persistence more scalable and broadly applicable across computational geometry and topological data analysis.

The main contributions of this paper are as follows: (i) We give a structural decomposition of $\B(\Lam^i M)$ into simple monotone streams, making explicit where higher-order intervals originate and how their multiplicities arise; (ii) We design a best-first algorithm that extracts the exact Top-$K$ intervals in near output-sensitive time, avoiding the cost of full enumeration; (iii) We show that the Top-$K$ length vector is stable under bottleneck perturbations, providing a concise and noise-robust summary; (iv) We establish an $\Omega(M\log M)$ lower bound in the comparison model, proving that our preprocessing cost is optimal up to constants.

In the rest of the paper, we develop mathematical foundations first, and provide structure theorems and algorithms with complexity results. We then prove stability and optimality results. Finally, we give an experimental verification.

\section{Basic Concepts and Fundamental Theorems}\label{sec:prelim}
%We fix notation, the computational model, and the sweep/tie conventions. We also give the exterior-power interval calculus and stability property used later. 
Let $i\ge 2$ be a fixed constant throughout the paper.
%\subsection{Input model, bars, and tie breaking}

Let $M$ be a tame, pointwise finite-dimensional persistence module over
$I \subset \R$ with barcode $\B(M) \;=\; \{\, J_r = [b_r,d_r) \,\}_{r=1}^M$,  where, for convenience of notation, we use the same symbol $M$ for the module and the corresponding number of intervals (for persistence module basics, see \cite{Oudot15,ChazalStructure16}).
We write $M_t$ as usual for the value of $M$ at $t\in I$. 
$\Lambda^i M$ is defined pointwise by $(\Lambda^iM)_t=\Lambda^i M_t$.\footnote{
Note that $\Lambda^i M_t$ on the right-hand side denotes the $i$-th exterior power of the vector space $M_t$.
On morphisms, for each $s\le t$ in $I$, we set
$(\Lambda^i M)_{s\le t} \;=\; \Lambda^i(M_{s\le t}):\ 
\Lambda^i(M_s) \longrightarrow \Lambda^i(M_t).$
For persistence homology basics, see also \cite{EdelsbrunnerHarer10,Oudot15}.
}

Unless otherwise stated, we assume all bars are finite ($d_r<\infty$); this
covers common filtrations on finite complexes/graphs. 
%\paragraph{Infinite bars.}
In some filtrations (e.g.\ $H_0$ of Vietoris--Rips or \v{C}ech complexes),
some bars extend to $+\infty$. Our algorithm and decomposition extend
unchanged if such bars are handled in either of the following standard ways:
(i) \emph{Truncation:} fix a global time horizon $t_{\max}$ and replace each
  infinite bar $[b_r,\infty)$ by $[b_r,t_{\max})$, so that $\Lam^i$-intervals
  respect the finite horizon;
(ii) \emph{Relative formulation:} regard an infinite bar as persisting until
  a formal symbol $\infty$, and observe that in the exterior-power interval
  calculus (Theorem~\ref{thm:interval-calculus} below) only $\min\{d_r,\ldots\}$ appears, so truncating to any
  sufficiently large finite cutoff yields the same Top-$K$ results.
Thus we may assume without loss of generality that all bars are finite.

We adopt the \emph{closed–open} convention and process a global event list of
all $b_r$ and $d_r$ \emph{sorted by time}, breaking ties by handling
\emph{deaths before births} \cite{EdelsbrunnerHarer10,ChazalStructure16}.\footnote{In persistent homology, the \emph{birth} of a bar is the parameter 
value at which a homology class first appears, and its \emph{death} is the 
value at which that class disappears 
\cite{EdelsbrunnerHarer10,ChazalStructure16}.} 
Among births at the same time we fix any total
order that is consistent across the sweep (cf.\
\cite{EdelsbrunnerHarer10,ELZ02}).
For an event sweep from $-\infty$ to $+\infty$, just \emph{before} the birth of
bar $r$ at time $b_r$, define the \emph{alive set}
$A_r \;:=\; \{\, s : [b_s,d_s) \ \text{is alive just before } b_r \,\}$ and 
$c_r := |A_r|.$
Order $A_r$ by non-increasing death times; write these as
$d_r(1) \ge d_r(2) \ge \cdots \ge d_r(c_r)$.
For $j \in \{1,\dots,c_r\}$ set
\begin{equation}\label{eq:anchored-length}
\ell_r(j)\ :=\ \max\Bigl\{\,0,\ \min\{d_r,\ d_r(j)\} - b_r \Bigr\},
\end{equation}
so $j \mapsto \ell_r(j)$ is non-increasing.

\subsection{Exterior powers at the barcode level}
We first prove a fundamental theorem clarifying the barcode structure of exterior powers. 
\begin{theorem}[Exterior-power interval calculus]\label{thm:interval-calculus}
For any $i\ge 1$, the barcode of $\Lam^i M$ is the multiset
\[
\B(\Lam^i M)\;=\;\Bigl\{\,\bigl[\max_j b_{\ell_j},\, \min_j d_{\ell_j}\bigr)\ :\ 
\ell_1<\cdots<\ell_i,\ \max_j b_{\ell_j}<\min_j d_{\ell_j}\Bigr\}.
\]
\end{theorem}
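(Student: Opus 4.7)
The plan is to reduce the claim to a pointwise basis computation for $\Lam^i$, combined with functoriality of exterior powers, and then to invoke uniqueness of barcode decomposition. First, by the structure theorem for pointwise finite-dimensional persistence modules (Crawley--Boevey), fix an isomorphism $M \cong \bigoplus_{r=1}^{M} k_{[b_r,d_r)}$, where $k_{[b,d)}$ denotes the interval module. This yields, at each $t \in I$, a distinguished basis $\{e_r(t) : t \in [b_r,d_r)\}$ of $M_t$, and the structure maps act by $M_{s\le t}(e_r(s)) = e_r(t)$ when $r$ is still alive at $t \ge s$, and $=0$ otherwise.

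Next, compute pointwise. At each $t$, the vector space $\Lam^i M_t$ has the standard wedge basis
\[
\bigl\{\, e_{\ell_1}(t) \wedge \cdots \wedge e_{\ell_i}(t) \ :\ \ell_1 < \cdots < \ell_i,\ t \in \textstyle\bigcap_{j} [b_{\ell_j},d_{\ell_j})\,\bigr\}.
\]
The intersection equals $[\max_j b_{\ell_j},\, \min_j d_{\ell_j})$, which is nonempty exactly when the inequality appearing in the statement of the theorem holds. By functoriality of $\Lam^i$ and the diagonal action on decomposable elements, $(\Lam^i M)_{s\le t}$ sends each wedge basis vector to itself while \emph{all} of $\ell_1,\ldots,\ell_i$ remain alive at $t$, and to $0$ as soon as any one of them dies. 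Hence, for each admissible $i$-tuple, the line spanned by the wedge basis vector assembles into an interval submodule of $\Lam^i M$ isomorphic to $k_{[\max_j b_{\ell_j},\, \min_j d_{\ell_j})}$.

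Summing over all strictly increasing $i$-tuples with nonempty intersection yields an internal direct-sum decomposition of $\Lam^i M$ into interval modules with exactly the claimed endpoints, since the chosen wedges already form a basis of $\Lam^i M_t$ at every $t$. The uniqueness of interval decomposition for pointwise finite-dimensional persistence modules then identifies this multiset with $\B(\Lam^i M)$, establishing the theorem.

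The main obstacle I anticipate is verifying that this pointwise construction actually produces a decomposition in the category of persistence modules, not merely of graded vector spaces at each $t$. This reduces to the ``preserve-or-kill'' behavior of $(\Lam^i M)_{s\le t}$ on decomposable wedges, which follows cleanly from functoriality of $\Lam^i$ applied to the interval-module structure maps; once that is in hand, the combinatorial content (the $\max/\min$ endpoint formulas) is a direct consequence of intersecting the supports of the one-dimensional interval factors.
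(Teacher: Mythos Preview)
Your proof is correct and follows essentially the same approach as the paper: decompose $M$ into interval modules, compute $\Lam^i$ pointwise on the resulting basis (equivalently, via the natural splitting of $\Lam^i$ over direct sums), observe that each wedge summand is supported on the intersection of the constituent intervals, and invoke naturality/functoriality to assemble the pointwise pieces into a persistence-module decomposition. The only cosmetic difference is that the paper phrases the key step via the natural isomorphism $\Lam^i\bigl(\bigoplus_r V_r\bigr)\cong\bigoplus_{\ell_1<\cdots<\ell_i} V_{\ell_1}\wedge\cdots\wedge V_{\ell_i}$ and its naturality in the $V_r$, whereas you track an explicit wedge basis and verify the ``preserve-or-kill'' action of the structure maps directly; these are the same argument in categorical versus elementwise language.
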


\begin{proof}
By the barcode decomposition for tame pointwise finite-dimensional modules
(see \cite{Oudot15,ChazalStructure16}), there is a (noncanonical) isomorphism
$M \;\cong\; \bigoplus_{r=1}^M I_{[b_r,d_r)},$ 
where $I_{[b_r,d_r)}$ is the (one-dimensional) interval module supported on
$[b_r,d_r)$.

Fix $i\ge 1$. Apply the exterior-power functor pointwise in $t\in\R$.
For any finite family of vector spaces, $\Lam^i\!\Bigl(\bigoplus_{r=1}^M V_r\Bigr)
\;\cong\;
\bigoplus_{\substack{\ell_1<\cdots<\ell_i\\ \ell_j\in\{1,\dots,M\}}}
\; V_{\ell_1}\wedge\cdots\wedge V_{\ell_i}$
is natural in the $V_r$. Evaluating at time $t$ with $V_r=(I_{[b_r,d_r)})_t$ yields
\[
\Lam^i(M_t)
\;\cong\;
\bigoplus_{\ell_1<\cdots<\ell_i}
\; (I_{[b_{\ell_1},d_{\ell_1})})_t \wedge \cdots \wedge (I_{[b_{\ell_i},d_{\ell_i})})_t .
\]
Since each $(I_{[b_\ell,d_\ell)})_t$ is either $0$ (if $t\notin[b_\ell,d_\ell)$) or a
$1$-dimensional $k$ (if $t\in[b_\ell,d_\ell)$), the summand for
$\{\ell_1,\dots,\ell_i\}$ is $k$ precisely when $t$ lies in the intersection
$\bigcap_{j=1}^i [b_{\ell_j},d_{\ell_j})$, and is $0$ otherwise. Hence, as $t$
varies, the subfunctor generated by this summand is the interval module supported on
$\bigcap_{j=1}^i [b_{\ell_j},d_{\ell_j})
\;=\;
\bigl[\max_j b_{\ell_j},\,\min_j d_{\ell_j}\bigr),$ 
which is nonzero exactly when $\max_j b_{\ell_j}<\min_j d_{\ell_j}$.

Naturality of the above isomorphisms with respect to the structure maps of
$M$ shows that these pointwise decompositions assemble to an isomorphism of
persistence modules
$\Lam^i M
\;\cong\;
\bigoplus_{\ell_1<\cdots<\ell_i}
I_{\,[\max_j b_{\ell_j},\,\min_j d_{\ell_j})},$ 
with the convention that empty intersections contribute the zero module and
hence no bar. Therefore the barcode of $\Lam^i M$ is precisely the stated
multiset of intervals.
\end{proof}

\subsection{Top-$K$ with multiplicity}\label{sec:topk-mult}
Let the multiset of lengths of $\B(\Lam^i M)$ (counted with multiplicity) 
be sorted in non-increasing order as $L_1\ge L_2\ge\cdots$. For $K\ge 1$, the
\emph{Top-$K$ multiset} is $\{L_1,\dots,L_K\}$ (with multiplicity), and the
\emph{Top-$K$ length vector} is
\[
\mathbf L_K(M,i)\ :=\ (L_1,\dots,L_K)\in\mathbb{R}_{\ge 0}^K,
\]
padded with zeros if necessary.\footnote{When multiple intervals have equal length, any ordering of the ties is acceptable, since our results concern the Top-$K$ multiset and the sorted length vector $\mathbf L_K(M,i)$, which are invariant under tie-breaking (cf. top-$K$ aggregation \cite{Fagin03}).}

%\footnote{As to tie policy,  when two lengths are equal, we break ties arbitrarily and do not require a stable ordering across anchors: any permutation of equal values is acceptable. In particular, Algorithm~\ref{alg:topk} below may emit all $w_i(j)$ copies of a tied length coming from one anchor before emitting the same length from another anchor; under this policy, all results about the Top-$K$ \emph{multiset} and the vector $\mathbf L_K(M,i)$ remain invariant (cf.\ top-$K$ aggregation \cite{Fagin03}).}

%\subsection{Multiplicity shorthands and per-anchor grouping}

For $j\ge i-1$ define the binomial weight
$w_i(j)\ :=\ \binom{j-1}{\,i-2\,}.$
In Section~\ref{sec:structure} we prove that, for fixed anchor $r$, all $\Lam^i$ intervals whose \emph{largest chosen rank} equals $j$ have common length $\ell_r(j)$ and total multiplicity $w_i(j)$; moreover, the union over all anchors gives the full multiset $\B(\Lam^i M)$.

\subsection{Bottleneck distance and stability}
Let $\Delta=\{(t,t)\in\R^2:t\in\R\}$ denote the diagonal. For barcodes $X,Y$ (finite multisets of points $(b,d)$ with $b<d$), an \emph{$\varepsilon$-matching} is a partial matching between $X\cup\Delta$ and $Y\cup\Delta$ such that matched pairs are within $L_\infty$-distance $\le\varepsilon$ and unmatched points lie within $\varepsilon$ of the diagonal. The \emph{bottleneck distance} is
\[
d_B(X,Y)\ :=\ \inf\{\ \varepsilon\ge 0\ :\ \text{there exists an $\varepsilon$-matching}\ \}.
\]

\begin{theorem}[Stability of exterior powers]\label{thm:stability-exterior}
For $i\ge 1$ and tame persistence modules $M,M'$, 
\[
   d_B\!\bigl(\B(\Lam^i M),\, \B(\Lam^i M')\bigr)
   \;\le\; d_B\!\bigl(\B(M),\, \B(M')\bigr).
\]
\end{theorem}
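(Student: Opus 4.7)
The plan is to lift any nearly optimal $\varepsilon$-matching between $\B(M)$ and $\B(M')$ to an $\varepsilon$-matching between $\B(\Lam^i M)$ and $\B(\Lam^i M')$, using the explicit combinatorial description from Theorem~\ref{thm:interval-calculus}. Concretely, fix any $\varepsilon'>d_B(\B(M),\B(M'))$ and a corresponding $\varepsilon'$-matching $\mu$. Every bar of $\Lam^i M$ is indexed by an $i$-subset $S=\{\ell_1<\cdots<\ell_i\}$ of bars of $M$ satisfying $\max_j b_{\ell_j}<\min_j d_{\ell_j}$; define a partial matching $\tilde\mu$ on $\Lam^i$-barcodes by pairing the interval induced by $S$ with the interval induced by $\mu(S)$ exactly when all $i$ bars of $S$ are matched by $\mu$ to genuine bars of $\B(M')$ \emph{and} both $S$ and $\mu(S)$ produce nonempty $\Lam^i$-intervals.

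For such matched pairs, I would use the fact that $\max$ and $\min$ are $1$-Lipschitz in each coordinate: if $|b_{\ell_j}-b'_{\mu(\ell_j)}|\le\varepsilon'$ and $|d_{\ell_j}-d'_{\mu(\ell_j)}|\le\varepsilon'$ for all $j$, then $|\max_j b_{\ell_j}-\max_j b'_{\mu(\ell_j)}|\le\varepsilon'$ and similarly for the deaths, so the two induced intervals lie within $L_\infty$-distance $\varepsilon'$, as required.

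For unmatched $\Lam^i$-intervals, two cases must be controlled. First, if some bar $\ell\in S$ is matched by $\mu$ to $\Delta$, then $d_\ell-b_\ell\le 2\varepsilon'$, so the induced interval satisfies $\min_j d_{\ell_j}-\max_j b_{\ell_j}\le d_\ell-b_\ell\le 2\varepsilon'$ and is within $\varepsilon'$ of $\Delta$. Second, if all bars of $S$ are matched to real bars of $\B(M')$ but $\mu(S)$ fails to produce a nonempty $\Lam^i$-interval (so $\min d'\le\max b'$ for the image subset), then the simultaneous $1$-Lipschitz bounds give
\[
\min_j d_{\ell_j}-\max_j b_{\ell_j} \;\le\; (\min d'+\varepsilon')-(\max b'-\varepsilon') \;=\; (\min d'-\max b')+2\varepsilon' \;\le\; 2\varepsilon',
\]
again within $\varepsilon'$ of $\Delta$. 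The same argument symmetrically handles unmatched intervals of $\B(\Lam^i M')$. Letting $\varepsilon'\downarrow d_B(\B(M),\B(M'))$ yields the theorem.

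The main obstacle is the second unmatched case: it is not immediately obvious that losing the $\Lam^i$-interval on one side forces the still-existing interval on the other side to be short, but the joint use of $1$-Lipschitzness of $\max$ and $\min$ together with the degeneracy $\min d'\le\max b'$ supplies exactly the $2\varepsilon'$ slack needed. Once this case is resolved, the matching verifications reduce to the routine Lipschitz estimates above.
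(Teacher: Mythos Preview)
Your proof is correct but proceeds along a genuinely different route from the paper. The paper argues functorially: if $M$ and $M'$ are $\varepsilon$-interleaved, applying the functor $\Lam^i$ pointwise to the interleaving maps $f_t,g_t$ immediately yields an $\varepsilon$-interleaving of $\Lam^i M$ and $\Lam^i M'$, and the isometry theorem converts this back to a bottleneck bound. You instead work combinatorially at the barcode level, lifting a matching $\mu$ on $\B(M)$ to an explicit matching $\tilde\mu$ on $\B(\Lam^i M)$ via the interval calculus (Theorem~\ref{thm:interval-calculus}) and the $1$-Lipschitzness of $\max$ and $\min$. Your approach is more elementary in that it avoids the isometry theorem entirely and stays within the matching formulation of $d_B$; it also makes the constant~$1$ in the Lipschitz bound visibly tight. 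The paper's approach, on the other hand, is shorter and transparently generalizes to any endofunctor on vector spaces applied pointwise to persistence modules, not just $\Lam^i$. Your case analysis (in particular the ``degenerate image'' case where $\mu(S)$ has empty intersection) is handled correctly; the only cosmetic point is to remark that $\tilde\mu$ is indeed a partial bijection because $\mu$ is injective on matched bars, so distinct $i$-subsets of matched bars map to distinct $i$-subsets.
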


\begin{proof}
Suppose $M$ and $M'$ are $\varepsilon$-interleaved, i.e.\ there exist
linear maps $f_t: M_t \to M'_{t+\varepsilon}$ and
$g_t: M'_t \to M_{t+\varepsilon}$ commuting with structure maps and
satisfying the usual zig–zag relations up to shift~$2\varepsilon$.
Because $\Lam^i$ is a functor on vector spaces that preserves
linear maps, applying $\Lam^i$ to each $f_t,g_t$ yields natural
transformations
\[
   \Lam^i f_t: \Lam^i M_t \;\to\; \Lam^i M'_{t+\varepsilon},
   \qquad
   \Lam^i g_t: \Lam^i M'_t \;\to\; \Lam^i M_{t+\varepsilon}.
\]
These commute with the induced structure maps of $\Lam^i M$ and
$\Lam^i M'$, since functors preserve commutative diagrams.
Moreover, the zig–zag identities are preserved under $\Lam^i$,
because if $g_{t+\varepsilon}\circ f_t$ equals the shift map
$M_t \to M_{t+2\varepsilon}$, then
$\Lam^i g_{t+\varepsilon}\circ \Lam^i f_t$
equals the shifted map on $\Lam^i M_t$.
Thus $\Lam^i M$ and $\Lam^i M'$ are also $\varepsilon$-interleaved.
By the fundamental isometry theorem of persistence
(\cite{ChazalOudot14,ChazalStructure16}), interleaving distance equals
bottleneck distance on barcodes. Hence
$d_B\!\bigl(\B(\Lam^i M),\B(\Lam^i M')\bigr)
   \;\le\; d_B\!\bigl(\B(M),\B(M')\bigr).$
\end{proof}

This result shows that exterior powers preserve the classical stability of persistence, ensuring that higher-order interaction features remain robust under perturbations of the input data.

\subsection{Computational model and data structures}
We analyze running time in the RAM (Random Access Machine) model with comparisons; sorting $O(M)$ endpoints costs $O(M\log M)$, which is information-theoretically unavoidable in this model (cf.\ Section~\ref{sec:lowerbound}; see also \cite{KnuthVol3,CLRS09}). We use \emph{coordinate compression} of distinct death times to $\{1,\dots,N\}$.

\paragraph{Persistent order-statistics (OS) tree.}
We will use a standard persistent segment tree over $\{1,\dots,N\}$ storing
\emph{counts of alive bars} at each compressed death coordinate (cf.\
\cite{DriscollEtAl89}). It supports:
\begin{itemize}
  \item \textsc{Update}$(\text{root},\ \text{pos},\ \pm 1)$ in $O(\log M)$ time, returning a new root and keeping the old root immutable;
  \item \textsc{KthFromRight}$(\text{root},\ k)$: returns the death value of the $k$-th alive
        \emph{bar} in non-increasing order, counting with multiplicity. Equivalently,
        the tree maintains cumulative counts of alive bars at each coordinate,
        and the query walks these counts from the right to locate the $k$-th bar; 
        %This matches the definition $d_r(j)$ above;
  \item \textsc{Size}$(\text{root})$ in $O(1)$ time.
\end{itemize}
During the sweep we store, for each birth of $r$, the snapshot root $T_r$
\emph{before} inserting $r$ (encoding $A_r$) and the integer
$c_r=\textsc{Size}(T_r)$. The total space is $O(M)$ for heap buffers plus
$O(M\log M)$ nodes for persistence.

\section{Algorithmic Structural Decomposition Theorem}\label{sec:structure}

We now derive a birth-anchored, rank–grouped description of $\B(\Lam^i M)$ that will drive our best‑first algorithm. Throughout this section, the sweep/tie conventions of Section~\ref{sec:prelim} apply (see also \cite{Ghrist08,Oudot15} for background on barcode manipulations).

\subsection{Anchors and alive sets}
Given an $i$-tuple $I=\{\ell_1,\dots,\ell_i\}\subseteq\{1,\dots,M\}$ with
\(
\max_j b_{\ell_j}<\min_j d_{\ell_j},
\)
let
\[
t^\star \ :=\ \max_j b_{\ell_j}\,,\qquad
A^\star\ :=\ \{\,\ell_j:\ b_{\ell_j}=t^\star\,\}.
\]
By our event order (deaths before births, and a fixed total order among equal-time births), there is a \emph{unique} index
\[
r^\star\in A^\star\quad\text{that is processed last at time }t^\star.
\]
We call $r^\star$ the \emph{anchor} of $I$. At the moment just \emph{before} $b_{r^\star}=t^\star$, all elements of $I\setminus\{r^\star\}$ are alive; hence $I\setminus\{r^\star\}\subseteq A_{r^\star}$, where $A_{r^\star}$ is the alive set from Section~\ref{sec:prelim}. Ordering $A_{r^\star}$ by 
non-increasing death times, write the deaths as
\(
d_{r^\star}(1)\ge d_{r^\star}(2)\ge\cdots\ge d_{r^\star}(c_{r^\star}).
\)

\begin{lemma}\label{lem:anchor-unique}
Assume the sweep order processes deaths before births at equal times and fixes a total order among simultaneous births.  
For any $\Lam^i$ interval arising from $I=\{\ell_1,\dots,\ell_i\}$ as in Theorem~\ref{thm:interval-calculus}, let $t^\star=\max_j b_{\ell_j}$.  
Then there is a unique anchor $r^\star$, namely the index processed last among the births at $t^\star$, and $I\setminus\{r^\star\}\subseteq A_{r^\star}$.  
Conversely, for any $r$ and any $(i{-}1)$-subset of $A_r$, the $i$-tuple $\{r\}\cup S$ yields a (possibly zero-length) $\Lam^i$ interval with birth $b_r$.
\end{lemma}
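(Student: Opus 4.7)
The plan is to prove the two directions separately, each reducing to careful bookkeeping of the sweep conventions (deaths before births at equal times, and a fixed total order on simultaneous births).

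For the forward direction, observe that $A^\star=\{\ell_j:b_{\ell_j}=t^\star\}$ is nonempty (the max is attained) and that the fixed tie-breaking order on coincident births yields a unique last-processed index $r^\star\in A^\star$, establishing uniqueness. To show $I\setminus\{r^\star\}\subseteq A_{r^\star}$, I would treat each $\ell_j\neq r^\star$ in two cases. If $b_{\ell_j}<t^\star$, then the nondegeneracy hypothesis $t^\star<\min_k d_{\ell_k}$ from \ThmA's predecessor (Theorem~\ref{thm:interval-calculus}) gives $d_{\ell_j}>t^\star>b_{\ell_j}$, so $\ell_j$ is alive on an open left neighborhood of $t^\star$, hence alive at the instant just before $b_{r^\star}$. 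If $b_{\ell_j}=t^\star$, then the tie-breaking order places $\ell_j$ strictly before $r^\star$, so $\ell_j$'s birth has already been processed at the moment we prepare to process $r^\star$; combined with $d_{\ell_j}>t^\star$, this puts $\ell_j\in A_{r^\star}$.

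For the converse, let $r$ be any index and $S\subseteq A_r$ any $(i{-}1)$-subset; set $I=\{r\}\cup S$. By definition of $A_r$, each $s\in S$ is alive just before $b_r$, so either $b_s<b_r$ or $b_s=b_r$ with $s$ preceding $r$ in the tie-breaking order; in either case $b_s\le b_r$, so $\max_j b_{\ell_j}=b_r$. Moreover, since deaths are processed before births at equal times, $s\in A_r$ forces $d_s>b_r$ strictly, and similarly $d_r>b_r$ by our finite-bar convention. Applying Theorem~\ref{thm:interval-calculus} then produces the interval $[b_r,\ \min\{d_r,\min_{s\in S}d_s\})$, which has birth $b_r$ as claimed (positive-length here, though the lemma's wording tolerates the degenerate zero-length case in principle).

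The main obstacle I anticipate is purely the tie-breaking bookkeeping: one must cleanly distinguish indices simultaneous with $r^\star$ but earlier in the sweep from those that are later, and verify that the converse never produces a tuple excluded by the convention. Once this is spelled out, both directions follow directly from the definitions of $A_r$ and of the anchor time $t^\star$, with no further structural input beyond Theorem~\ref{thm:interval-calculus}.
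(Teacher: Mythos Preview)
Your proposal is correct and follows essentially the same route as the paper's proof: uniqueness of $r^\star$ from the fixed tie order on $A^\star$, the inclusion $I\setminus\{r^\star\}\subseteq A_{r^\star}$ from the nondegeneracy $t^\star<\min_k d_{\ell_k}$ and the tie rule, and the converse by reading off $[b_r,\min\{d_r,\min_{s\in S}d_s\})$ from Theorem~\ref{thm:interval-calculus}. Your explicit two-case split ($b_{\ell_j}<t^\star$ versus $b_{\ell_j}=t^\star$) and your observation that the converse interval is in fact strictly positive (since deaths-before-births forces $d_s>b_r$) are refinements the paper leaves implicit, but the underlying argument is the same.
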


\begin{proof}
By Theorem~\ref{thm:interval-calculus}, the $\Lambda^i$ interval from 
$I=\{\ell_1,\dots,\ell_i\}$ is 
$[\max_j b_{\ell_j},\,\min_j d_{\ell_j})$. 
Let $t^\star=\max_j b_{\ell_j}$. Among the indices with birth $t^\star$, exactly 
one is processed last under the tie rule; call it $r^\star$. 
At time $b_{r^\star}$, all other $\ell_j$ are alive, so 
$I\setminus\{r^\star\}\subseteq A_{r^\star}$. 
Conversely, for any anchor $r$ and any $(i-1)$-subset $S\subseteq A_r$, 
the $i$-tuple $\{r\}\cup S$ yields the interval 
$[b_r,\min\{d_r,d_s:s\in S\})$ by the same formula.
\end{proof}

\subsection{Rank–grouping at a fixed anchor}
Fix an anchor $r$ with alive set $A_r$ of size $c_r$. Let
$S=\{j_1<\cdots<j_{i-1}\}\subseteq \{1,\dots,c_r\}$ 
be the ranks of the chosen neighbors (so the corresponding death times are $d_r(j_1),\dots,d_r(j_{i-1})$). The $\Lam^i$ interval produced by $(r,S)$ is
\[
I(r,S)\ =\ \bigl[b_r,\ \min\{d_r,\ d_r(j_1),\dots,d_r(j_{i-1})\}\bigr),
\]
whose length equals, by definition \eqref{eq:anchored-length},
$|I(r,S)|\ =\ \ell_r(\max S).$
Hence the length depends only on the \emph{largest} rank in $S$.

\begin{proposition}\label{prop:multiplicity-fixed-largest}
Fix $r$ and a rank $j\in\{i-1,\dots,c_r\}$. The number of $(i-1)$-subsets $S$
of ranks with $\max S=j$ equals $w_i(j)=\binom{j-1}{i-2}$. All such subsets yield
the same length $\ell_r(j)$, truncated below by $0$ as in~\eqref{eq:anchored-length}.
In particular, when $d_r(j)\le b_r$ the resulting value is $\ell_r(j)=0$, which
contributes nothing to $\B(\Lam^i M)$.
\end{proposition}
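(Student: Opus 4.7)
The plan is to split the proposition into two essentially independent claims: a \emph{length} claim (that $|I(r,S)|$ depends on $S$ only through $\max S$) and a \emph{counting} claim (that the number of such subsets is $w_i(j)$). Both follow directly from the monotone ordering of the ranked death times together with one elementary binomial identity, so no deep machinery is needed beyond what is already set up.

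For the length claim, I would write $S=\{j_1<\cdots<j_{i-1}\}$ with $j_{i-1}=j$ and use the defining property of the ordering $d_r(1)\ge d_r(2)\ge\cdots\ge d_r(c_r)$: the largest-rank index $j$ picks out the smallest death among the chosen neighbors, so
\[
\min\{d_r,\,d_r(j_1),\dots,d_r(j_{i-1})\}\;=\;\min\{d_r,\,d_r(j)\}.
\]
Substituting into the explicit formula for $I(r,S)$ and invoking definition~\eqref{eq:anchored-length}, which already folds in the max-with-zero truncation, yields $|I(r,S)|=\ell_r(j)$. This shows that every $S$ with $\max S=j$ has the same anchored length and, in particular, that when $d_r(j)\le b_r$ the common value is $0$.

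For the counting claim, I would fix $j$ and observe that the remaining $i-2$ elements of $S$ must be chosen freely from the indices strictly below $j$, of which there are $j-1$; the condition $j\ge i-1$ makes this feasible. This gives exactly $\binom{j-1}{i-2}=w_i(j)$ subsets, matching the binomial weight defined in Section~\ref{sec:topk-mult}.

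I do not expect a real obstacle: the statement is essentially a monotonicity-plus-binomial exercise. The only bookkeeping point worth flagging is the degenerate regime $d_r(j)\le b_r$, where the clamped length is $0$. There the $w_i(j)$ tuples still exist at the rank-combinatorial level, but they contribute the empty interval, which is absent from the multiset $\B(\Lam^i M)$ per Theorem~\ref{thm:interval-calculus}. I would spell this out so that the multiplicity statement is read at the level of rank-subsets rather than at the level of surviving bars, which keeps the accounting compatible with the enumeration used later in the best-first algorithm.
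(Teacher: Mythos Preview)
Your proposal is correct and follows exactly the paper's approach: the paper also argues that including rank $j$ and choosing the remaining $i{-}2$ ranks from $\{1,\dots,j-1\}$ gives $\binom{j-1}{i-2}$ subsets, and that the shared length is $\min\{d_r,d_r(j)\}-b_r$ (truncated at $0$) because only the largest rank matters. Your treatment of the degenerate case $d_r(j)\le b_r$ is a helpful clarification but not a departure from the paper's argument.
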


\begin{proof}
To have $\max S=j$, one must include rank $j$ and choose the remaining $i{-}2$ ranks from $\{1,\dots,j-1\}$, giving $\binom{j-1}{i-2}$ choices. The shared length follows because $|I(r,S)|=\min\{d_r,d_r(j)\}-b_r$ depends only on the largest rank (and is truncated below by $0$ as in \eqref{eq:anchored-length}).
\end{proof}

\begin{theorem}[Anchored rank--grouping]\label{thm:anchored-grouping}
Assume bars are closed--open and ties are broken by processing deaths before births,
with a fixed total order among simultaneous births. For each birth $r$, let $A_r$
be the set of bars alive just before $b_r$, ordered by non--increasing death
time $d_r(1)\ge\cdots\ge d_r(c_r)$. Define
\[
  \ell_r(j):=\max\{0,\ \min\{d_r,d_r(j)\}-b_r\},\qquad
  J_r:=\{\,j\in\{i-1,\dots,c_r\}: \ell_r(j)>0\,\}.
\]
Then every interval of $\B(\Lambda^i M)$ arises uniquely from a pair $(r,S)$,
where $r$ is the anchor (the last-processed birth at
$t^\star=\max_j b_{\ell_j}$) and $S\subseteq A_r$ has $|S|=i-1$,
with length $|I(r,S)|=\ell_r(\max S)$. Consequently, for each anchor $r$ the
anchored multiset equals
\[
  \{\, \ell_r(j)\ \text{with multiplicity } w_i(j)=\tbinom{j-1}{i-2}
    : j\in J_r \,\},
\]
with $j\mapsto \ell_r(j)$ non--increasing, and globally
\[
  \B(\Lambda^i M)\;=\;\bigsqcup_{r=1}^M
  \{\, \ell_r(j)\ \text{with multiplicity } w_i(j): j\in J_r \,\},
\]
a disjoint union of multisets of finite (positive-length) intervals.
\end{theorem}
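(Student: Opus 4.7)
The plan is to assemble the theorem from the three preceding results in the section: Theorem~\ref{thm:interval-calculus}, Lemma~\ref{lem:anchor-unique}, and Proposition~\ref{prop:multiplicity-fixed-largest}. First I would invoke Theorem~\ref{thm:interval-calculus} to write $\B(\Lam^i M)$ as the multiset of intervals $[\max_j b_{\ell_j},\min_j d_{\ell_j})$ indexed by $i$-subsets $\{\ell_1<\cdots<\ell_i\}$ with nonempty intersection. Then I would apply Lemma~\ref{lem:anchor-unique} to reindex this multiset by pairs $(r,S)$, where $r$ is the unique anchor (the last birth at time $t^\star=\max_j b_{\ell_j}$ under the fixed tie-breaking rule) and $S=I\setminus\{r\}\subseteq A_r$ is the $(i{-}1)$-subset of alive neighbors. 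The two directions of the lemma give a bijection between $i$-subsets yielding finite positive-length intervals and pairs $(r,S)$ with $r\in\{1,\dots,M\}$ and $S\subseteq A_r$ of size $i-1$; partitioning the multiset by $r$ immediately produces a disjoint union indexed by the anchors.

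Next I would compute the length of $I(r,S)$. Writing $S$ in terms of ranks $\{j_1<\cdots<j_{i-1}\}\subseteq\{1,\dots,c_r\}$ inside $A_r$ (ordered by non-increasing death time), the associated deaths satisfy $d_r(j_1)\ge\cdots\ge d_r(j_{i-1})$, so $\min_{s\in S}d_s = d_r(j_{i-1}) = d_r(\max S)$. Combined with the common birth $b_r$, this yields $|I(r,S)| = \max\{0,\min\{d_r,d_r(\max S)\}-b_r\} = \ell_r(\max S)$, matching \eqref{eq:anchored-length}. The crucial observation here is that the length depends only on the largest rank in $S$, which permits a finer grouping. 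Applying Proposition~\ref{prop:multiplicity-fixed-largest}, for each $j\in\{i-1,\dots,c_r\}$ there are exactly $w_i(j)=\binom{j-1}{i-2}$ subsets with $\max S=j$, all sharing the length $\ell_r(j)$; restricting to $j\in J_r$ (where $\ell_r(j)>0$) isolates the positive-length contributions. Monotonicity of $j\mapsto \ell_r(j)$ follows from the monotonicity of $j\mapsto d_r(j)$ together with the truncation in \eqref{eq:anchored-length}. Taking the disjoint union over all anchors $r$ then yields the global formula.

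The main obstacle is ensuring the indexing $(r,S)$ partitions the multiset cleanly, with no interval counted twice or missed. This rests entirely on the tie-breaking convention (deaths before births, plus a fixed total order on simultaneous births), which is what makes the anchor $r^\star\in I$ canonical and well-defined; without it, an $i$-subset with multiple births at the common time $t^\star$ would admit several plausible anchors and the partition would collapse. Once the anchor is unique, the remainder of the argument is bookkeeping: expand the length formula, group by the largest rank, and read off the binomial multiplicity. Equal-length intervals originating from distinct pairs $(r,S)$ or distinct anchors remain as separate contributions to the multiset, which is consistent with the tie-handling discussion of Section~\ref{sec:topk-mult} and is exactly what the Top-$K$ results downstream require.
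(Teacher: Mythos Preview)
Your proposal is correct and follows essentially the same route as the paper's proof: invoke Theorem~\ref{thm:interval-calculus} for the interval description, use the anchor/alive-set correspondence (Lemma~\ref{lem:anchor-unique}, which the paper's proof effectively inlines rather than cites) to reindex by pairs $(r,S)$, observe that $|I(r,S)|=\ell_r(\max S)$ depends only on the largest rank, apply Proposition~\ref{prop:multiplicity-fixed-largest} for the binomial multiplicity, and restrict to $J_r$ before taking the disjoint union. Your discussion of the tie-breaking convention as the hinge that makes the partition well-defined is exactly the point the paper relies on.
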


\begin{proof}
Fix $i\ge 2$. By the interval calculus (Theorem~\ref{thm:interval-calculus}),
any $i$-tuple $I=\{\ell_1,\dots,\ell_i\}$ with
$\max_j b_{\ell_j}<\min_j d_{\ell_j}$ produces the $\Lam^i$–interval
$[\max_j b_{\ell_j},\ \min_j d_{\ell_j})$.
Let $t^\star=\max_j b_{\ell_j}$ and choose the unique index $r^\star$ that is
processed last among those with $b_{\ell_j}=t^\star$. Then
$I\setminus\{r^\star\}\subseteq A_{r^\star}$, and $r^\star$ is the
\emph{anchor} of $I$. Conversely, for any anchor $r$ and
$(i{-}1)$-subset $S\subseteq A_r$, the $i$-tuple $\{r\}\cup S$ yields
\[
I(r,S)=\bigl[\, b_r,\ \min\{d_r,\ d_s: s\in S\}\,\bigr).
\]
Thus every element of $\B(\Lam^i M)$ arises uniquely from some pair $(r,S)$.

Now order $A_r$ by non–increasing death times
$d_r(1)\ge d_r(2)\ge\cdots\ge d_r(c_r)$.
If $S=\{j_1<\cdots<j_{i-1}\}$ are the ranks of the chosen neighbors, then
the length of $I(r,S)$ depends only on the largest rank:
$|I(r,S)| \;=\;\ell_r(\max S).$
Therefore all $(i{-}1)$-subsets with the same maximal rank $j$ yield
the same length $\ell_r(j)$.

At this point Proposition~\ref{prop:multiplicity-fixed-largest} applies:
it tells us that the number of such subsets is exactly
$w_i(j)=\binom{j-1}{i-2}$, and that they all contribute the same
value $\ell_r(j)$ (truncated at $0$).
Hence for each anchor $r$, the multiset of anchored lengths is
\[
\bigl\{\, \ell_r(j)\ \text{with multiplicity } w_i(j) \;:\;
  j\in\{i-1,\dots,c_r\}\,\bigr\}.
\]

Finally, define $J_r=\{j: \ell_r(j)>0\}$. Restricting to $j\in J_r$
removes the zero-length intervals, which do not belong to
$\B(\Lam^i M)$. Because each interval has a unique anchor,
the global barcode is the disjoint multiset union over anchors:
\[
\B(\Lam^i M)\;=\;\bigsqcup_{r=1}^M
   \bigl\{\, \ell_r(j)\ \text{with multiplicity } w_i(j)\ :\ j\in J_r \,\bigr\}.
\]
Monotonicity of $j\mapsto \ell_r(j)$ follows directly from
the ordering $d_r(1)\ge d_r(2)\ge\cdots$.
This proves the theorem.
\end{proof}

When $i=2$, $w_2(j)=\binom{j-1}{0}=1$, so each rank contributes exactly one element and the anchored stream becomes a simple non-increasing sequence $\ell_r(1)\ge \ell_r(2)\ge\cdots\ge \ell_r(c_r)$.

The above theorem gives a complete and nonredundant
decomposition of $\B(\Lambda^i M)$ into per-anchor monotone streams with
closed-form multiplicities, providing the structural foundation for efficient
Top-$K$ algorithms and showing exactly how higher-order intervals are organized.

\section{The \textsc{TopK}--$\Lam^i$ Algorithm}\label{sec:algorithm}

We now give a best‑first algorithm that outputs the $K$ longest elements of $\B(\Lam^i M)$ \emph{without} enumerating the entire multiset.

\subsection{Preprocessing: sweep and persistent order statistics}
Build the global event list of all births and deaths, sorted by time, with \emph{deaths before births} at ties (Section~\ref{sec:prelim}; cf.\ \cite{ELZ02,EdelsbrunnerHarer10}). Coordinate-compress distinct death times to $\{1,\dots,N\}$ and maintain a \emph{persistent} order‑statistics tree over this axis, storing counts of alive deaths.
During the sweep:
\begin{itemize}
  \item On a death of bar $x$, perform an update $-1$ at the index of $d_x$.
  \item On a birth of bar $r$ at time $b_r$, \emph{before} inserting $r$:
    store the current snapshot root $T_r$ encoding $A_r$, and record $c_r=\textsc{Size}(T_r)$; then insert $+1$ at the index of $d_r$ so that $r$ is alive for later anchors.
\end{itemize}
This costs $O(M\log M)$ time and $O(M\log M)$ persistent nodes.

\subsection{Best-First Top-$K$ Extraction Algorithm}
By Theorem~\ref{thm:anchored-grouping}, $\B(\Lam^i M)$ is the multiset union of rank–grouped streams. We run a \emph{grouped} best‑first search where each heap entry represents the current head $(r,j)$ of anchor $r$’s stream at rank $j$ with key $\ell_r(j)$ and weight $w_i(j)=\binom{j-1}{i-2}$. This mirrors classic best‑first paradigms in top‑$K$ aggregation and selection over structured sets (cf.\ \cite{Fagin03,FredericksonJohnson84}). 

The entire procedure is given in Algorithm~\ref{alg:topk} below.
This algorithm leverages the rank-grouped structure of 
$\B(\Lambda^i M)$ to compute the exact Top-$K$ intervals in near 
output-sensitive time, avoiding full enumeration when $K\ll|\B(\Lambda^i M)|$.

\begin{algorithm}[!h]
\caption{\textsc{TopK}--$\Lam^i$ (grouped best-first; fixed $i\ge 2$)}\label{alg:topk}
\begin{algorithmic}[1]
\REQUIRE Barcode $\bars$, layer $i\ge 2$, target $K$
\STATE \textbf{Sweep \& snapshots:} as above, obtain $\{(T_r,c_r)\}_{r=1}^M$.
\STATE Initialize an empty max-heap $H$ keyed by length.
\FOR{each anchor $r$ with $c_r\ge i{-}1$}
  \STATE $j\gets i{-}1$; query $d_r(j)$ on $T_r$; set $L\gets \ell_r(j)$ via \eqref{eq:anchored-length}
  \IF{$L>0$} \STATE push $(L,r,j,w_i(j))$ into $H$ \ENDIF
\ENDFOR
\STATE $S\gets\emptyset$ \COMMENT{$S$ collects output lengths (with multiplicity)}
\WHILE{$|S|<K$ and $H$ not empty}
  \STATE pop $(L,r,j,w)$ from $H$
  \STATE append $\min\{w,\,K-|S|\}$ copies of $L$ to $S$
         \COMMENT{bulk-emission of ties per anchor; see \S\ref{sec:topk-mult}}
  \IF{$j<c_r$}
    \STATE $j\gets j{+}1$; query $d_r(j)$; $L\gets \ell_r(j)$
    \IF{$L>0$} \STATE push $(L,r,j,w_i(j))$ into $H$ \ENDIF
  \ENDIF
\ENDWHILE
\RETURN $S$ (the $\TopK$ multiset in non-increasing order)
\end{algorithmic}
\end{algorithm}

\paragraph{Interval identities.} Algorithm~\ref{alg:topk} outputs the Top-$K$ 
\emph{length multiset} directly. 
If actual interval \emph{identities} are required, each bulk emission 
at line~12 can be expanded into the explicit $(i{-}1)$-subsets of 
ranks that realize the multiplicity $w_i(j)$, truncated once $K$ intervals 
are produced. This refinement preserves the asymptotic complexity bound 
for fixed $i$.

\paragraph{Unbundled (colex) variant.}
Alternatively, one can represent states as strictly increasing $(i{-}1)$-tuples of ranks and expand at most $i$ colex neighbors per pop; this yields the same outputs with an $i$ factor in the loop cost (cf.\ \cite{FredericksonJohnson84,Fagin03}). We focus on the grouped variant for the sharpest bound.

%\subsection{Correctness and complexity}

\begin{theorem}[Correctness and complexity]\label{thm:topk-correctness-complexity}
For fixed $i\ge 2$, Algorithm~\ref{alg:topk} (the grouped variant) outputs exactly the $K$ longest elements of $\B(\Lam^i M)$ (with multiplicity) in non‑increasing order in
$$O\bigl((M+K)\log M\bigr)$$
time, using $O(M)$ heap space plus $O(M\log M)$ persistent nodes.
The alternative unbundled (colex) variant runs in $O\bigl((M+iK)\log M\bigr)$ time.
\end{theorem}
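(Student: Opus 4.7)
The plan is to split the proof into a correctness argument and a complexity count, both driven by the rank-grouped decomposition of Theorem~\ref{thm:anchored-grouping}. That theorem writes $\B(\Lam^i M)$ as the disjoint multiset union, over anchors $r$, of streams $\{(\ell_r(j),\,w_i(j)) : j \in J_r\}$ with $j\mapsto \ell_r(j)$ non-increasing. The algorithm is therefore best-first over $M$ sorted streams with bulk multiplicities, and I will exploit this monotonicity in the style of classical selection arguments (cf.\ \cite{FredericksonJohnson84,Fagin03}).

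For correctness I will maintain the following loop invariant at the top of each iteration of the main \textbf{while} loop: for every anchor $r$ with $c_r \ge i-1$ whose stream still has unread ranks of positive length, the heap $H$ holds exactly one entry, corresponding to the least rank $j_r$ not yet popped for $r$, with key $\ell_r(j_r)$ and weight $w_i(j_r)$. Because each per-anchor stream is non-increasing, the maximum of all unemitted lengths across $\B(\Lam^i M)$ equals the maximum key currently in $H$; hence the popped triple $(L,r,j)$ has $L$ equal to the next global maximum. Bulk emission of $\min\{w,\,K-|S|\}$ copies of $L$ is legitimate because all $w_i(j)=\binom{j-1}{i-2}$ intervals in the rank-group share length $L$, matching the Top-$K$ multiset convention of Section~\ref{sec:topk-mult}. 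Advancing to rank $j+1$ when $j<c_r$ restores the invariant, and the algorithm halts either at $|S|=K$ or when $H$ empties, the latter meaning $\B(\Lam^i M)$ has fewer than $K$ positive-length intervals.

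For the grouped variant's complexity I would split costs into three buckets. Preprocessing inherits the persistent OS tree cost: $O(M)$ updates at $O(\log M)$ each, producing $O(M\log M)$ persistent nodes plus $M$ snapshot roots $T_r$. The initialization loop performs $O(M)$ \textsc{KthFromRight} queries and heap pushes at $O(\log M)$. In the main loop, each iteration strictly increases $|S|$ by at least one (since $w_i(j)\ge 1$ whenever $j\ge i-1$), so it executes at most $\min(K,\,|\B(\Lam^i M)|) \le K$ times; each iteration performs one heap pop, at most one OS query, and at most one heap push, each $O(\log M)$. Summing yields $O((M+K)\log M)$. Heap space is $O(M)$ because each anchor contributes at most one live entry, and the persistent tree contributes $O(M\log M)$ nodes.

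For the unbundled colex variant, each heap entry encodes an explicit $(i-1)$-tuple of ranks; a pop produces exactly one output length and schedules up to $i$ colex successors. Hence there are $O(K)$ pops and $O(iK)$ pushes in the main loop, giving $O((M+iK)\log M)$ overall. The main obstacle I anticipate is rigorously justifying the heap invariant under bulk emission: one must show that discharging an entire rank-group in a single pop cannot skip over a longer unemitted interval held elsewhere. This reduces to two sub-facts — every other anchor's current head is represented in $H$ with key $\le L$, and within anchor $r$ the non-increasing monotonicity of $j\mapsto \ell_r(j)$ forces all future entries to have key $\le \ell_r(j)=L$ — both of which follow directly from Theorem~\ref{thm:anchored-grouping}.
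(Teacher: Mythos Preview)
Your proposal is correct and follows essentially the same route as the paper: both invoke Theorem~\ref{thm:anchored-grouping} to view $\B(\Lam^i M)$ as a disjoint union of monotone per-anchor streams, maintain the invariant that the heap holds exactly the current head of each nonempty stream, use monotonicity to conclude the popped key is the global maximum of all unemitted lengths, and count at most $K$ main-loop iterations (since $w_i(j)\ge 1$) at $O(\log M)$ each on top of $O(M\log M)$ preprocessing and initialization. Your treatment of bulk emission and of the unbundled colex variant (one output per pop, at most $i$ successors pushed) also matches the paper's.
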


\begin{proof}
We first prove correctness. By Theorem~\ref{thm:anchored-grouping}, the global multiset is
the disjoint union of monotone streams $\{\ell_r(j)\}$ with weights $w_i(j)$.
The heap stores precisely the current heads of all nonempty streams. Because each
stream is non-increasing and the heap key is the head length, once $\ell_r(j)$
is popped no unseen element can exceed it, since every remaining element is
bounded by its stream head and every head is in the heap. The algorithm therefore
emits all $w_i(j)$ copies at once; ties may thus be grouped per anchor, consistent
with the tie policy in Section~\ref{sec:prelim}. Since the Top-$K$ vector is
invariant under permutations of equal values, bulk emission is safe. After advancing
that stream, the invariant is preserved. By induction, the outputs are exactly the
global Top-$K$ in order.

We prove the complexity statements. Preprocessing costs $O(M \log M)$. During initialization,
each anchor with $c_r \ge i-1$ contributes at most one heap entry, obtained by a
single \textsc{KthFromRight} query at $j=i-1$. Anchors with $c_r < i-1$ contribute
none, so the number of initial heap entries is at most $M$, giving $O(M \log M)$
time overall. Each pop outputs at least one item, so there are at most $K$ pops
(or fewer if $H$ becomes empty when the total output is $<K$). A pop performs
$O(1)$ heap operations and a single order-statistics query, each $O(\log M)$,
giving $O(K\log M)$ for the loop in the grouped variant and $O(iK\log M)$ in the
unbundled variant. Space bounds follow from the heap size $O(M)$ and the
persistent tree. All $\log M$ factors are under the RAM model, where basic
arithmetic and memory accesses take $O(1)$ time.
\end{proof}

In the special case $i=2$: since $w_2(j)=1$ for all $j$, every pop outputs a single element and advances $j\mapsto j{+}1$, yielding the stated $O((M+K)\log M)$ bound.

\section{Stability of the Top-$K$ Length Vector}\label{sec:stability}

We show that the Top-$K$ length vector of $\B(\Lam^i M)$ varies Lipschitz‑continuously (with constant~$2$) under bottleneck perturbations of the input barcode $\B(M)$. Throughout this section, $i\ge 1$ is fixed.

Recall from Section~\ref{sec:prelim} that $\mathbf L_K(M,i)=(L_1\ge\cdots\ge L_K)$ denotes the non‑increasing Top‑$K$ length vector of $\B(\Lam^i M)$ (with multiplicity), padded with zeros if necessary.

\begin{theorem}[Top-$K$ stability]\label{thm:stability-topk}
If $d_B\!\big(\B(M),\B(M')\big)\le \varepsilon$, then for every fixed $i\ge 1$ and $K\ge 1$,
\[
\big\|\mathbf L_K(M,i) - \mathbf L_K(M',i)\big\|_\infty \ \le\ 2\varepsilon.
\]
\end{theorem}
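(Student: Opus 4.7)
The plan is a two-step reduction. First, I invoke Theorem~\ref{thm:stability-exterior} to pass from a bottleneck perturbation of $\B(M)$ to one of $\B(\Lam^i M)$, obtaining $d_B(\B(\Lam^i M),\B(\Lam^i M'))\le\varepsilon$. Second, I prove the purely combinatorial statement that the sorted non-increasing length vector of a barcode is $2$-Lipschitz with respect to $d_B$. This isolates the combinatorial content of the theorem; all the functorial work is packaged inside Theorem~\ref{thm:stability-exterior}.

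For the second step I unfold the definition of bottleneck distance. An $\varepsilon$-matching between $\B(\Lam^i M)$ and $\B(\Lam^i M')$ is a partial bijection $\mu$ between off-diagonal bars with $L_\infty$ displacement at most $\varepsilon$, together with unmatched bars lying within $\varepsilon$ of the diagonal. For a matched pair $(b,d)\leftrightarrow(b',d')$,
\[
\bigl|(d-b)-(d'-b')\bigr|\ \le\ |d-d'|+|b-b'|\ \le\ 2\varepsilon,
\]
and any unmatched bar has length at most $2\varepsilon$, since its $L_\infty$ distance to the diagonal equals half its length. I then extend $\mu$ to a total bijection between the two length multisets by pairing each unmatched bar with a phantom zero (and padding any residual slack with zero--zero pairs), producing a bijection on length multisets in which every matched pair differs by at most $2\varepsilon$.

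Finally, I invoke a standard monotone-rearrangement principle: if two finite multisets of nonnegative reals admit \emph{any} bijection with pointwise displacement at most $\delta$, then their non-increasing sorted vectors $a^\ast,b^\ast$ satisfy $|a^\ast_k-b^\ast_k|\le\delta$ for every $k$. The short counting proof is: if $a^\ast_k>b^\ast_k+\delta$, then the $k$ largest elements of $A$ each exceed $b^\ast_k+\delta$, so they must match into the strictly fewer than $k$ elements of $B$ exceeding $b^\ast_k$, a contradiction; symmetry handles the opposite direction. Applying this with $\delta=2\varepsilon$ to the padded length multisets and truncating to the first $K$ coordinates yields $\|\mathbf L_K(M,i)-\mathbf L_K(M',i)\|_\infty\le 2\varepsilon$.

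The main obstacle is the padding bookkeeping in the second step: one must verify that pairing an unmatched bar with $0$ respects the $2\varepsilon$ budget (which is exactly why the bound on length-to-diagonal distance is needed) and that padding the Top-$K$ vectors themselves with zeros is harmless. Tie-breaking among equal lengths is irrelevant, since the final $L_\infty$ bound is invariant under permutations of ties, matching the convention adopted for $\mathbf L_K(M,i)$ in Section~\ref{sec:topk-mult}.
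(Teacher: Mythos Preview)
Your proof is correct and follows essentially the same route as the paper's: invoke Theorem~\ref{thm:stability-exterior} to transfer the bottleneck bound to $\B(\Lam^i M)$, convert the $\varepsilon$-matching into a bijection of zero-padded length multisets with pointwise displacement at most $2\varepsilon$, and then apply the same counting contradiction to conclude that the sorted vectors agree coordinatewise within $2\varepsilon$. Your exposition is slightly more explicit about the padding bookkeeping and isolates the monotone-rearrangement principle as a named lemma, but there is no substantive difference in approach.
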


\begin{proof}
Let $X=\B(\Lam^i M)$ and $Y=\B(\Lam^i M')$. By the stability of exterior powers
(Theorem~\ref{thm:stability-exterior}), we have $d_B(X,Y)\le \varepsilon$. Hence there
exists an $\varepsilon$‑matching between $X\cup\Delta$ and $Y\cup\Delta$
(cf.\ \cite{CohenSteiner07,ChazalStructure16}).

Form the multisets of lengths
$S \;=\; \{\, d-b : (b,d)\in X \,\}$ and $T \;=\; \{\, d'-b' : (b',d')\in Y \,\}.$
From the $\varepsilon$‑matching we obtain a bijection $\pi$ between $S$ and $T$
\emph{after padding the shorter multiset with zeros}: for any matched pair
$(b,d)\leftrightarrow(b',d')$ we set $\pi(d-b)=d'-b'$, and for any interval matched
to the diagonal we pair its length with $0$. For matched intervals we have 
$\big|(d-b)-(d'-b')\big| \ \le\ |d-d'|+|b-b'| \ \le\ 2\varepsilon,$ 
and if $(b,d)$ is matched to $\Delta$ within $\varepsilon$ then $|d-b|\le 2\varepsilon$
by definition of bottleneck matchings to the diagonal. Thus:
$|x-\pi(x)| \ \le\ 2\varepsilon$ for all $x\in S.$

Let $(s_1\ge s_2\ge\cdots)$ and $(t_1\ge t_2\ge\cdots)$ be the non‑increasing
rearrangements of $S$ and $T$ (padded with zeros to equal length). We claim that
$|s_k-t_k|\le 2\varepsilon$ for all $k$. Suppose, for contradiction, that
$s_k>t_k+2\varepsilon$. Then $S$ has at least $k$ elements $\ge s_k$, so their
images under $\pi$ are $\ge s_k-2\varepsilon>t_k$, implying that $T$ has at least
$k$ elements strictly greater than $t_k$, which contradicts the definition of $t_k$.
The reverse inequality $t_k>s_k+2\varepsilon$ is symmetric. Hence
$|s_k-t_k|\le 2\varepsilon$ for all $k$.
Finally, restricting to the first $K$ coordinates gives
$\big\|\mathbf L_K(M,i)-\mathbf L_K(M',i)\big\|_\infty \ \le\ 2\varepsilon.$
\end{proof}

%\begin{remark}[Tightness of the constant]
Note: the factor $2$ is tight: perturb one interval $[b,d)$ by shifting $b\mapsto b+\varepsilon$
and $d\mapsto d-\varepsilon$; the bottleneck distance is $\varepsilon$ while the length
changes by exactly $2\varepsilon$.
%\end{remark}

\section{Optimality: A Comparison-Model Lower Bound}\label{sec:lowerbound}

We prove that the $O(M\log M)$ preprocessing term in Theorem~\ref{thm:topk-correctness-complexity} is unavoidable in the comparison model, even if one only seeks the single longest $\Lam^2$ interval.

\begin{proposition}[Lower bound]\label{prop:lb}
Any comparison-based algorithm that, given the unsorted endpoints of a barcode $\B(M)$, computes the Top-$1$ length of $\B(\Lam^2 M)$ must perform $\Omega(M\log M)$ comparisons in the worst case.
\end{proposition}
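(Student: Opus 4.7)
The plan is to reduce from \emph{element distinctness} (ED): given $n$ real numbers $x_1,\dots,x_n$, decide whether they are pairwise distinct. In the comparison model, ED is known to require $\Omega(n\log n)$ comparisons in the worst case (classical; this follows from the algebraic decision tree lower bound of Ben-Or and is closely related to the standard sorting lower bound, cf.\ \cite{KnuthVol3,CLRS09}). I will show that a Top-$1$ oracle for $\B(\Lam^2 M)$ can be used as a black box to solve ED on $n=M$ elements with only $O(M)$ extra comparisons; hence any such oracle must itself use $\Omega(M\log M)$ comparisons.

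Given $x_1,\dots,x_n\in\R$, I would first, at no comparison cost, apply a fixed affine rescaling so that $x_k\in[0,1]$ for all $k$, and then form the barcode
\[
  \bars \;:=\; \bigl\{\,[\,x_k,\ x_k+2\,)\,\bigr\}_{k=1}^{n},
\]
taking $M=n$. Every pair of bars overlaps, so by Theorem~\ref{thm:interval-calculus} applied with $i=2$, the pair $\{k,\ell\}$ contributes the interval $\bigl[\max(x_k,x_\ell),\,\min(x_k,x_\ell)+2\bigr)$ of length $2-|x_k-x_\ell|$. Therefore the Top-$1$ length of $\B(\Lam^2 M)$ equals $2-\delta$, where $\delta=\min_{k\neq\ell}|x_k-x_\ell|$. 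Consequently the Top-$1$ value is strictly less than $2$ iff $x_1,\dots,x_n$ are pairwise distinct, and a single comparison of the oracle's output against the constant $2$ decides ED.

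Chaining these two steps, any comparison-based algorithm computing the Top-$1$ length of $\B(\Lam^2 M)$ in $T(M)$ comparisons yields an ED algorithm using $T(M)+O(M)$ comparisons, and the ED lower bound $\Omega(n\log n)=\Omega(M\log M)$ forces $T(M)=\Omega(M\log M)$, proving the proposition.

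The main obstacle will be to keep the composite algorithm inside the comparison model in which the cited ED bound is stated. Every endpoint fed to the oracle is one of the linear forms $x_k$ or $x_k+2$ in the reduction's inputs, so every comparison performed by the oracle between endpoints (or between an endpoint and a computed length or the threshold $2$) reduces to a comparison of two linear forms in the $x_k$ with rational coefficients; this places the whole reduction in the linear (hence algebraic) decision tree model where the ED lower bound applies. A secondary subtlety, the sweep's tie-breaking rule from Section~\ref{sec:prelim} for coincident births, is harmless: simultaneous births arise only when some $x_k=x_\ell$, in which case the Top-$1$ already equals $2$ and ED outputs ``not distinct'' regardless of the tie resolution.
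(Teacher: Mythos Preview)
Your proof is correct and follows essentially the same reduction as the paper: build unit-length (in the paper, length~$1$ on $[0,\tfrac12]$; in yours, length~$2$ on $[0,1]$) bars $[x_k,x_k+c)$ so that the Top-$1$ $\Lam^2$ length equals $c-\min_{k\ne\ell}|x_k-x_\ell|$, reducing Element Uniqueness/Distinctness to a single comparison of the oracle's output against $c$. The only differences are cosmetic constants, and your discussion of why the composite algorithm remains in the linear/algebraic decision-tree model is in fact more explicit than the paper's.
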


\begin{proof}
We reduce \emph{Element Uniqueness} (or, equivalently, the decision version of 1D minimum gap) to computing the Top-$1$ $\Lam^2$ length. Consider an input set $\{x_1,\dots,x_M\}\subset[0,\tfrac12]$ of real numbers (not necessarily distinct). Construct a barcode with bars
$J_r \ :=\ [\,x_r,\ x_r+1\,)$ $(r=1,\dots,M).$
For any pair $(r,s)$, the $\Lam^2$ intersection length equals
$|J_r\cap J_s|\ =\ \max\{\,0,\ 1-|x_r-x_s|\,\}.$
Since all $|x_r-x_s|\le\tfrac12<1$, every pair intersects and
$\max_{r\neq s}|J_r\cap J_s|\ =\ 1 - \min_{r\neq s}|x_r-x_s|.$
Hence the Top‑1 $\Lam^2$ length equals $1$ \emph{iff} there exists a duplicate ($\min_{r\ne s}|x_r-x_s|=0$). Therefore any algorithm that computes the Top‑1 $\Lam^2$ length can decide Element Uniqueness. The latter requires $\Omega(M\log M)$ comparisons in the algebraic decision tree/comparison model; thus computing the Top‑1 $\Lam^2$ length also requires $\Omega(M\log M)$ comparisons.
\end{proof}

%\begin{corollary}
%No comparison-based algorithm can achieve $o(M\log M)$ worst‑case time for Top‑$K$ on $\Lam^i$ (any fixed $i\ge 2$), even for $K=1$. Consequently, the $O(M\log M)$ preprocessing term in Theorem~\ref{thm:topk-correctness-complexity} is information‑theoretically optimal up to constant factors.
%\end{corollary}

%\begin{remark}[Robustness of the reduction]
%The proof also yields an $\Omega(M\log M)$ lower bound for the \emph{decision} problem “is the Top‑1 $\Lam^2$ length equal to $1$?”, and therefore a fortiori for computing the exact Top‑1 value. The restriction to $[0,\tfrac12]$ is without loss of generality for comparison‑based lower bounds (see also sorting lower bounds \cite{KnuthVol3}).
%\end{remark}

\section{Experimental Verification}\label{sec:experiments}

We evaluate the proposed best-first algorithm on synthetic barcodes where we can directly control \emph{overlap} (i.e., expected concurrency), which in turn controls the output size $\Kc=|\B(\Lam^2 M)|$. Our goals are to (i) validate exactness against a full enumeration baseline, and (ii) quantify wall-clock improvements as a function of $K$ and overlap.

For each trial we sample $M$ bars on $[0,1]$ as follows: birth times $b\sim\mathrm{Unif}[0,1)$ and independent exponential lengths $L\sim \mathrm{Exp}(\lambda)$ truncated to $[0,1-b]$, with mean parameter set by $\mathbb{E}[L]=\texttt{Lmean}\in\{0.03,0.05\}$. This yields expected concurrency $\approx M\cdot\texttt{Lmean}$, so larger \texttt{Lmean} produces heavier overlap and larger $\Kc$.

We focus on $i=2$ ($\Lam^2$), where the grouped best-first bound is tightest.\footnote{Empirical behavior for $i=3$ (and higher) matches that for $i=2$, providing evidence that the observed gains are not special to the case $i=2$.}
Baseline = \textsc{Enum--$\Lam^2$} (full pairwise enumeration) + \textsc{TopK-Select} (heap-based selection; cf.\ classical PH pipelines \cite{ELZ02,ZC05} and fast toolchains such as \cite{BauerRipser21}). Ours = \textsc{TopK--$\Lam^2$} (Algorithm~\ref{alg:topk}). We measure total wall-clock time per query and report the factor
$\textsf{speedup} \ := \ \frac{\textsf{baseline time}}{\textsf{ours time}}\,.$
Correctness is checked by exact multiset equality of the top-$K$ lengths up to numerical rounding.

%\paragraph{Settings.}
We use $M\in\{3000,5000\}$, $\texttt{Lmean}\in\{0.03,0.05\}$, and $K=10,000$. For each setting, we report a single representative run; in all runs the outputs of \textsc{TopK--$\Lam^2$} matched \textsc{Enum--$\Lam^2$} exactly.

Table~\ref{tab:main-results} summarizes the results (wall times in seconds).
%\footnote{Note that the baseline runtime appears nearly flat in $K$, since it always enumerates the entire $\B(\Lam^2 M)$ once and then performs a heap-based Top-$K$ selection; the cost is therefore dominated by full enumeration and independent of the target $K$.}
In the heavy overlap case: when $\Kc$ is large, \textsc{TopK--$\Lam^2$} consistently wins (2.1$\times$ at $K{=}10^4$ for $M{=}3000$, \texttt{Lmean}{=}0.05), because it avoids the baseline’s fixed “enumerate everything” cost.   
In the moderate overlap case: we have steady gains (1.45$\times$) for $M{=}5000$, \texttt{Lmean}{=}0.03.  
%(4) \textbf{Exactness}: 
All runs matched baseline Top-$K$ lengths exactly.
\begin{table}[!h]
\centering
\caption{Wall-clock results and speedups. $\Kc=|\B(\Lam^2 M)|$ grows with overlap.}
\label{tab:main-results}
\begin{tabular}{lrrrrr}
\hline
Setting & $K$ & $\Kc$ & Baseline (s) & Ours (s) & Speedup \\
\hline
$M{=}3000$, \texttt{Lmean}$=0.05$ (high) & $10{,}000$ & $422{,}272$ & $0.262$ & $0.125$ & $\mathbf{2.10}\times$ \\
$M{=}5000$, \texttt{Lmean}$=0.03$ (moderate) & $10{,}000$ & $733{,}622$ & $0.468$ & $0.322$ & $1.45\times$ \\
\hline
\end{tabular}
\end{table}

\section{Conclusions}\label{sec:conclusion}

We introduced a birth-anchored, rank–grouped structural decomposition of $\B(\Lam^i M)$ (\ThmA) and leveraged it to design a best‑first Top‑$K$ algorithm (\ThmB) that returns the exact Top‑$K$ multiset \emph{without} full enumeration. For fixed $i\ge 2$ the grouped running time is $O((M+K)\log M)$, while an unbundled colex variant runs in $O((M+iK)\log M)$. We established that the Top-$K$ length vector is $2$‑Lipschitz with respect to bottleneck perturbations (\ThmE), and proved a comparison‑model lower bound (\PropF) showing the $O(M\log M)$ preprocessing is information‑theoretically unavoidable. Experiments confirmed the theory: up to $2.1\times$ speedups in high‑overlap regimes, with exact outputs.

Beyond its theoretical contributions, this work shows that higher-order 
persistence constructions can be made practically scalable, enabling their systematic use as stable, discriminative features in data analysis, machine learning, and computational geometry (e.g., for molecular graph classification tasks). We also plan to develop variants of our method for applications at the intersection of logic, category theory and machine learning \cite{Mar09,Mar10,Mar12,Mar13a,Mar13b,Mar20a,Mar20b,BM21}.

\end{document}